\newtheorem{lemma}{Lemma}
\newtheorem{remark}{Remark}
\newtheorem{theorem}{Theorem}
\newtheorem{corollary}{Corollary}
\newtheorem{proof}{Proof}
\newcounter{mytempeqcounter}
\newcommand{\bigformulatop}[2]{%
	\begin{figure*}[!t]
		\normalsize
		\setcounter{mytempeqcounter}{\value{equation}}
		\setcounter{equation}{#1}
		#2
		
		\setcounter{equation}{\value{mytempeqcounter}}
		\hrulefill
		\vspace*{4pt}
	\end{figure*}
}
\newcommand{\qs}{{\bf s}}
\newcommand{\qv}{{\bf v}}
\newcommand{\qw}{{\bf w}}
\newcommand{\qx}{{\bf x}}
\newcommand{\qy}{{\bf y}}
\newcommand{\qz}{{\bf z}}
\newcommand{\qA}{{\bf A}}
\newcommand{\qF}{{\bf F}}
\newcommand{\qH}{{\bf H}}
\newcommand{\qI}{{\bf I}}
\newcommand{\qT}{{\bf T}}
\newcommand{\Nguard}{{N_{\mathsf{guard}}}}
\newcommand{\Sn}{\sigma_n^2}
\newcommand{\diag}{{\mathsf{diag}}}
\newcommand{\Trace}{{\mathsf{tr}}}
\newcommand{\hmki}{h_{pq,i}}
\newcommand{\hmkihat}{\hat{h}_{pq,i}}
\newcommand{\hmkjhatc}{\hat{h}_{pq,j}^*}
\newcommand{\tmki}{\tau_{pq,i}}
\newcommand{\nmki}{\nu_{pq,i}}
\newcommand{\Lmk}{L_{pq}}
\newcommand{\Lmkp}{L_{pq'}}
\newcommand{\Bmki}{\beta_{pq,i}}
\newcommand{\Lmki}{\ell_{pq,i}}
\newcommand{\Kmki}{k_{pq,i}}
\newcommand{\kpmki}{\kappa_{pq,i}}
\newcommand{\Kpq}{k_{pq}}
\newcommand{\Lpq}{\ell_{pq}}
\begin{document}

\title{\fontsize{0.81cm}{1cm}\selectfont  When Cell-Free Massive MIMO Meets OTFS Modulation: The Downlink Case }

\author{Mohammadali Mohammadi$^\dag$, Hien Quoc Ngo$^\dag$, and  Michail Matthaiou$^\dag$\\
\small{
$^\dag$Centre for Wireless Innovation (CWI), Queen's University Belfast, U.K.\\
Email:\{m.mohammadi, hien.ngo, m.matthaiou\}@qub.ac.uk
}}\normalsize

\maketitle
\begin{abstract}
We provide a performance evaluation of orthogonal time frequency space (OTFS) modulation in cell-free massive MIMO (multiple-input
multiple-output) systems.  By leveraging the inherent sparsity of the delay-Doppler (DD) representation of time-varying channels, we apply the embedded pilot-aided channel estimation method with reduced guard intervals and derive the minimum mean-square error estimate of the channel gains from received uplink pilots at the access points (APs). Each AP applies conjugate beamforming to transmit data to the users. We derive a closed-form expression for the individual user downlink throughput as a function of the numbers of APs, users and DD channel estimate parameters. We compare the OTFS performance with that of orthogonal frequency division multiplexing (OFDM) at high-mobility conditions. Our findings reveal that with uncorrelated shadowing, cell-free massive MIMO with OTFS modulation achieves up to $35\%$ gain in $95\%$-likely per-user throughput, compared with the OFDM counterpart. Finally, the increase in the per user throughput is more pronounced at the median rates over the correlated shadowing scenarios.
\let\thefootnote\relax\footnotetext{This work was supported by a research grant from the Department for the Economy Northern Ireland under the US-Ireland R\&D Partnership Programme.}
\end{abstract}


\vspace{0em}
\section{Introduction}~\label{Sec:Intro}
Future beyond-5G (B5G) wireless communication networks will provide ultra-reliable services as well as ubiquitous connectivity  for a range of emerging mobile applications, including vehicle-to-vehicle (V2V) communications, high-speed railways, and unmanned aerial vehicles. However, wireless channels in high mobility environments are inherently linear time-variant fading channels, which are so called \emph{doubly-selective} channels~\cite{Jakes}. In practice, communications over these channels suffer from severe Doppler spread. Therefore, the application of the widely adopted OFDM modulation in 4G and 5G is no longer a viable option for high-mobility scenarios, as the orthogonality between the sub-carriers breaks down due to inter-carrier interference.

Recently, Hadani \textit{et al} developed a new two-dimensional (2D) modulation, referred to as orthogonal time-frequency space (OTFS) modulation, which has shown significant efficiency in tackling the high Doppler problems occurring in OFDM modulation~\cite{Hadani:WCNC:2017}. By invoking the 2D inverse symplectic finite Fourier transform (ISFFT),  OTFS multiplexes the information symbols in the delay-Doppler (DD) domain rather than in the time-frequency (TF) domain as in OFDM modulation. More specifically, through the 2D transformation from the DD domain to the TF domain, each information symbol will span the entire TF domain channel over an OTFS frame. Therefore,  OTFS efficiently exploits the potential of full-diversity, which is the key for supporting ultra-reliable communications~\cite{Wei:WC:2021}. More importantly, with the domain transformation performed in OTFS modulation, rapidly time-varying channels in the TF domain are converted into quasi-stationary channels in the DD domain,  which exhibits a sparse and stable property. This sparse model facilitates significantly the channel estimation and data detection process for wireless receivers in high-mobility environments~\cite{Raviteja:TWC:2018,Raviteja:TVT:2019}. Therefore, there has been an upsurge of interest in applying OTFS to different wireless communication systems, including massive multiple-input multiple-output (MIMO) systems~\cite{Wang:JSAC:2020,Dobre:JSAC:2021,Feng:ICC:2021,Shi:TWC:2021}.

Cell-free massive  MIMO has been recently recognized as an alternative to co-located massive MIMO  for future wireless networks owing to its substantial improvement of connectivity, spectral and energy efficiencies~\cite{Hien:cellfree}. In cell-free massive MIMO, there are no cell boundaries and a large number of access points (APs) are distributed over a large geographic area and jointly serve many user equipments with different speed profiles. While a large body of research has delved into cell-free massive MIMO, they mostly consider flat-fading channels.  A few recent works in the literature focus on the performance of cell-free massive MIMO over frequency-selective fading channels~\cite{Nguyen:cellfree,Schotten:cellfree}. In~\cite{Nguyen:cellfree}, the authors analyzed the uplink achievable spectral efficiency of a frequency-selective cell-free massive MIMO system under the Wiener phase noise process and with single-carrier transmission. The work in~\cite{Nguyen:cellfree} has been extended to multi-carrier transmission in~\cite{Schotten:cellfree}, where a user-specific resource allocation method was proposed. However, the work of~\cite{Nguyen:cellfree,Schotten:cellfree} cannot accommodate high Doppler spread applications. In order to serve  high-mobility users with time-variant channels, the integration of cell-free massive MIMO and OTFS modulation is expected to further improve the network performance. To the authors' best knowledge, the consolidation of OTFS modulation with cell-free massive MIMO has not been reported before. Thus, this paper will focus on the downlink achievable rate analysis of OTFS in cell-free massive MIMO. The main contributions of our work are as follows:
\begin{itemize}
\item We apply the embedded pilot-aided channel estimation with reduced guard interval and derive the  minimum mean-square error (MMSE) estimate of the channel gains at the APs.
\item We investigate the achievable downlink rate of OTFS modulation in cell-free massive MIMO with conjugate beamforming. We derive closed-form expression for the individual user downlink throughput for finite numbers of APs and users, taking into account the effects of channel estimation errors.
\item Our findings demonstrate that a significant performance improvement can be achieved  by the OTFS  over the OFDM modulation in cell-free massive MIMO over high-mobility channels.
\end{itemize}

\textit{Notation:} We use bold upper case letters to denote matrices, and bold lower case letters to denote vectors; the superscripts $(\cdot)^*$ and $(\cdot)^\dag$ stand for the conjugate and conjugate-transpose, respectively; $\Trace(\qA)$ denotes the trace of $\qA$; $\diag(a_1, a_2, \dots , a_n)$ denotes a square $n\times n$ diagonal matrix whose element in the $i$th row and $i$th column is $a_i, i = 1, \ldots , n$; the matrix $\qF_X=\big(\frac{1}{\sqrt{X}} e^{-j2\pi\frac{k\ell}{X}}\big)_{k,\ell=0,\cdots,X-1}$
denotes the unitary DFT matrix of dimension $X \times X$; the operator $\otimes$ denotes the Kronecker product of two matrices; $(\cdot)_N$ denotes the modulo $N$ operation; $\mathfrak{R}[\cdot]$ denotes the real part of the quantity within the brackets; $\mathbb{N}[a,b]$ represents the set of natural numbers ranging from $a$ to $b$; finally, $\mathbb{E}\{\cdot\}$ denotes the statistical expectation.

\vspace{-0.0em}
\section{System Model}~\label{Sec:SysModel}
We consider a cell-free massive MIMO system consisting of $M_a$ APs and $K_u$ users.  The APs and users are all equipped with single antenna, and they are randomly located in a large area. APs are connected to a central processing unit (CPU) via a backhaul network. The users are assumed to move at different speeds, thus the channels between the AP and users experience doubly-selective fading. An OTFS frame is divided into two phases: uplink payload transmission with channel estimation, and downlink payload transmission. Uplink and downlink transmissions are underpinned by time-division duplex (TDD) operation.

\emph{OTFS Modulation and Channel Model:} Consider an OTFS system with $M$ sub-carriers having $\Delta f$ bandwidth each, and $N$ symbols having $T$ symbol duration. Therefore, the total bandwidth of the system is $M\Delta f$ and $NT$ is the duration of an OTFS block. Modulated data symbols of the $q$th user $\{x_q[k,\ell] , k\in \mathbb{N}[0,N-1], \ell\in \mathbb{N}[0,M-1]\}$ are arranged over DD lattice $\Lambda=\left\{\frac{k}{NT}, \frac{\ell}{M\Delta f}\right\}$, where $k$ and $\ell$ represent the Doppler shift and delay index, respectively. Data symbols $x_q[k,\ell]$ are firstly converted to $X_q[n,m]$ in the TF domain through an ISFFT according to
\vspace{-0.2em}
\begin{align}~\label{eq:Xtf}
X_q[n,m] = \frac{1}{\sqrt{MN}}\sum_{k=0}^{N-1}\sum_{\ell=0}^{M-1} x_q[k,\ell] e^{j2\pi(\frac{nk}{N}-\frac{m\ell}{M})},
\end{align}
where $n\in\mathbb{N}[0,N-1]$, and $m\in\mathbb{N}[0,M-1]$. Accordingly, by using Heisenberg transform, $X_q[n,m]$ is converted to a time domain signal as
\vspace{-0.2em}
\begin{align}~\label{eq:st}
s_q(t)=\sum_{n=0}^{N-1}\sum_{m=0}^{M-1}
X_q[n,m] g_{tx}(t-nT)
e^{j2\pi m \Delta f(t-nT)},
\end{align}
where $g_{tx}(t)$ is the transmitter pulse of duration $T$.

The DD channel between the $q$th user and the $p$th AP is given by~\cite{Raviteja:TWC:2018}
\vspace{-1.3em}
\begin{align}~\label{eq:hmk}
h_{pq}(\tau,\nu) = \sum_{i=1}^{\Lmk}\hmki\delta(\tau-\tmki)\delta(\nu-\nmki),
\end{align}
where $\Lmk$ denotes the number of paths from the $q$th user to the $p$th AP, $\tmki$, $\nmki$, and  $\hmki$ denote the delay, Doppler shift, and the channel gain, respectively, of the $i$th path of the $q$th user to the $p$th AP. The complex channel gains $\hmki$  for different $(pq, i)$ are independent and identically distributed (i.i.d.) RVs with $\hmki\sim\mathcal{CN}(0,\Bmki)$. The delay and Doppler shift for the $i$th path are given by $\tmki = \frac{\Lmki}{M\Delta f}$ and $\nmki = \frac{\Kmki+\kpmki}{N T}$, respectively, where $\Lmki \in \mathbb{N}[0, M-1]$ and $\Kmki\in \mathbb{N}[0, N-1]$ are the delay index and Doppler index of the $i$th path, and $\kpmki\in (-0.5,0.5)$ is a fractional Doppler associated with the $i$th path.  Let $\Kpq$ and $\Lpq$ denote the delay and Doppler taps corresponding to the largest delay and Doppler between the $q$th user and the $p$th AP. We note that the typical value of the sampling time $1/(M\Delta f)$ is usually sufficiently small in the delay domain. Hence, the impact of fractional
delays in typical wideband systems can be neglected~\cite{Raviteja:TWC:2018}.

\vspace{-0.4em}
\subsection{Uplink Payload Data Transmission}~\label{Sec:ULdata}
In the uplink, all $K_u$ users simultaneously send their data to the APs. The received signal at the $p$th AP is expressed as
\vspace{-0.2em}
\begin{align}~\label{eq:rt}
r_p(t)\!=\!
\sum_{q=1}^{K_u}
\!\int\!\!\!\!\int \!\!
h_{pq}(\tau,\nu)s_q(t\!-\!\tau)e^{j2\pi\nu(t\!-\!\tau)} d\tau d\nu\! +\! w_p(t),
\end{align}
where $w_p(t)$ represents the noise signal in the time domain following a stationary Gaussian random process and we have $w_p(t)\sim\mathcal{CN}(0,\Sn)$ with $\Sn$ denoting the noise variance. The received signal is processed via a Wigner transform, implementing a receiver filter with an impulse response $g_{rx}(t)$ followed by a sampler, to obtain the received samples $\{Y_p[n,m], n\in\mathbb{N}[0,N-1], m\in\mathbb{N}[0,M-1]$\} in the TF domain
\begin{align}~\label{eq:Ymn}
Y_p[n,m]=\int r_p(t) g_{rx}(t-nT)
e^{-j2\pi m \Delta f (t-nT)} dt.
\end{align}

Finally by applying a SFFT to $Y_p[n,m]$, assuming that  practical non-ideal rectangular waveforms are used for transmit and receive pulse shaping filters~\cite{Raviteja:TWC:2018}, the received signal at the $p$th AP in the DD domain can be written as
\vspace{-0.2em}
\begin{align}~\label{eq:yAP:rec:frac}
y_p[k,\ell]
\!&=\!\sqrt{\rho_u}
\sum_{q=1}^{K_u}\!
\sum_{k'=0}^{\Kpq}
\sum_{\ell'=0}^{\Lpq}
b[k',\ell']
\!\!
\sum_{c=-N/2}^{N/2-1}\!
\!
{h}_{pq}[k',\ell']\alpha[k,l,c]\nonumber\\
&\hspace{2em}
\times x_q[(k\!-\!k'\!+\!c)_N,(\ell\!-\!\ell')_M]
\!+\! w_p[k,\ell],
\end{align}
where $\rho_u$ is the normalized uplink signal-to-noise ratio (SNR); $b[k',\ell']\in\{0,1\}$ is a path indicator, i.e., $b[k',\ell']=1$ indicates there is a path with Doppler tap $k'$ and delay tap $\ell'$, otherwise $b[k',\ell']=0$ (i.e., $\sum_{k'=0}^{\Kpq}\sum_{\ell'=0}^{\Lpq} b[k',\ell']=\Lmk$). Moreover $\alpha[k,l,c]$ is given by~\cite{Raviteja:TWC:2018}
\vspace{-0.2em}
\begin{align}
\alpha[k,\ell,c] \!=\!
	\left\{
	\begin{array}{ll}
			\!\!\!\frac{1}{N}\beta_i(c) e^{-j2\pi\frac{(\ell-\ell')(k'+\kappa')}{MN}}
			    								 & \hspace{-6em}    \ell'\leq \ell < M  \\
			\!\!\!\frac{1}{N}(\beta_i(c)\!-\!1) e^{-j2\pi\frac{(\ell\!-\ell')(k'+\kappa')}{MN}}e^{-j2\pi\frac{(k-k'+c)_N}{N}}
			\nonumber\\
										& \hspace{-6em}  0\leq \ell <\ell', \end{array} \right.
\end{align}
where $\beta_i(c) = \frac{ e^{-j2\pi(-c-\kappa')} -1}{e^{-j\frac{2\pi}{N}(-c-\kappa')} -1}$ and $\kappa'$ denotes the fractional Doppler associated with the $(k',\ell')$ path. Moreover, in~\eqref{eq:yAP:rec:frac}, $ w_p[k,\ell]$ is the received additive noise, which by using~\eqref{eq:rt} and~\eqref{eq:Ymn} can be expressed as
\vspace{-0.2em}
\begin{align*}
w_p[k,\ell] = \frac{1}{\sqrt{MN}}\sum_{n=0}^{N-1}\sum_{m=0}^{M-1} W_p[n,m] e^{-j2\pi(\frac{nk}{N}-\frac{m\ell}{M})},
\end{align*}
where $W_p[n,m]$ is the received noise sampled at $t=nT$ and $\nu=m\Delta f$, given by $W_p[n,m]=\int w_p(t) g_{rx}(t-nT)
e^{-j2\pi m \Delta f (t-nT)} dt.$
It can be readily checked that since $w_p(t)\sim\mathcal{CN}(0,\Sn)$, we have also  $w_p[k,\ell]\sim\mathcal{CN}(0,\Sn)$.

For the sake of simplicity of presentation and analysis, we consider the vector form representation of the input-output relationship of OTFS system in the DD domain. Let $\qv\in\{\qx_q, \qy_p, \qw_p\}\in \mathcal{C}^{MN\times 1}$ where $\qx_q$, $\qy_p$, and $\qw_p$ denote the vector of transmitted symbols from the $q$th user, the received signal vector at the $p$th AP, and the corresponding noise vector, respectively.
Hence, the input-output relationship in~\eqref{eq:yAP:rec:frac} can be expressed in vector form as
\vspace{-0.4em}
\begin{align}~\label{eq:yAPm:Vect}
\qy_p = \sum_{q=1}^{K_u} \sqrt{\rho_u \eta_q}\qH_{pq} \qx_q + \qw_p
\end{align}
where  $\qH_{pq}\in \mathcal{C}^{MN\times MN}$ is the effective DD domain channel between the $q$th user and $p$th AP, given by~\cite{KWAN:TWC:2021}
\vspace{-0.4em}
\begin{align}~\label{eq:Hpq}
\qH_{pq}
&=
\sum_{i=1}^{\Lmk}
\hmki \qT_{pq}^{(i)},
\end{align}
where $\qT_{pq}^{(i)}=(\textbf{F}_N \otimes \textbf{I}_M)
\boldsymbol{\Pi}^{\Lmki} \boldsymbol{\Delta }^{\Kmki+\kpmki}
(\textbf{F}_N^\dag \otimes \textbf{I}_M)$  and $\boldsymbol{\Pi}$ denotes a $MN\times MN$ permutation matrix, given in~\cite{KWAN:TWC:2021}, and $\boldsymbol{\Delta} = \diag\{z^0,z^1,\ldots,z^{MN-1}\}$ is a diagonal matrix with $z=e^{\frac{j2\pi}{MN}}$.

\textit{Channel estimation: }
By considering TDD operation, we rely on channel reciprocity to acquire channel state information (CSI). An intuitive method to estimate CSI is to transmit an impulse in the DD domain as the training pilot and then estimate the DD channel impulse response using the least square (LS) or MMSE estimator~\cite{Raviteja:TVT:2019}. Transmitted impulse pilots are spread by the channel and interfere with data symbols in the DD domain. Therefore, insertion of guard symbols to avoid the interference between the pilot and data symbols is required~\cite{Raviteja:TVT:2019,Flanagan:vtc:2020}. This technique, however, incurs a huge pilot overhead in cell-free massive MIMO. For instance, in the considered system, $K_u$ impulses are required to be transmitted. Assume that the DD channel responses of $K_u$ users have a finite support $[0, \ell_{max}]$ along the delay dimension and $[-k_{max}, k_{max}]$ along the Doppler dimension ($\ell_{max} = \max_{q}\ell_{pq}$ and $k_{max} = \max_{q}k_{pq}$). Then, the guard intervals between two adjacent impulse along the Doppler and delay dimension should not be smaller than $2k_{max}$ and $\ell_{max}$, respectively. Moreover, users cannot use dedicated pilot and guard grids of each other for data transmission.  As a result, the pilot length to transmit $K_u$ impulses in OTFS-based cell-free massive MIMO systems should be $2K_uk_{max}\ell_{max}$. This would be more challenging in the case of fractional Doppler, where by using full-guard pilot pattern~\cite{Raviteja:TVT:2019} the length of pilot overhead should be $2K_u N \ell_{max}$. Therefore, with a large number of users, the pilot overhead would be prohibitively high.

As an alternative solution, we deploy embedded-pilot channel estimation method with reduced guard symbols, while users are allowed to use dedicated pilot and guard DD grids of each other for data transmission. Consider $\varphi_q$ with $\mathbb{E}\{|\varphi_q|^2\}=P_p$ denoting a known pilot symbol for the $q$th user at a specific DD grid location $[k_q, \ell_q]$. Let  $x_{dq}[k,\ell]$ denote the data symbol at grid point $[k,\ell]$, and assume that each pilot is surrounded by a guard region of zero symbols.  Therefore, for the $q$th user the pilot, guard, and data symbols in the DD grid are arranged as
\begin{align}~\label{eq:datapatern}
x_q[k,\ell] = \left\{ \begin{array}{ll}
\!\!\!\varphi_q         &  k=k_q, \ell=\ell_q,  \\
\!\!\!0           &  k\in \mathcal{K}, k\neq k_q ~  \ell\in\mathcal{L}, \ell\neq\ell_q\\
\!\!\!x_{dq}[k,\ell] &  \mbox{otherwise}.\end{array} \right.
\end{align}
where $\mathcal{K}=\{k_q-2k_{max}-2\hat{k},\cdots, k_q+2k_{max}+2\hat{k}\}$, $\mathcal{L}=\{\ell_q-\ell_{max}\leq \ell \leq \ell_q+\ell_{max}\}$ and $\hat{k}$ denotes the additional guard to mitigate the spread due to fractional Doppler shift~\cite{Raviteja:TVT:2019}. In this case, the total overhead is $\Nguard=(2\ell_{max}+1)(4k_{max}+4\hat{k}+1)$ per each user. At the receiver,  the received symbols $y_p[k,\ell]$, $k_q-k_{max}-\hat{k} <k< k_q+k_{max}+\hat{k}$, $\ell_q\leq \ell \leq \ell_q-\ell_{max}$ are used for channel estimation. Therefore, from~\eqref{eq:yAP:rec:frac}, we have
\vspace{-0.3em}
\begin{align}~\label{eq:yAPpil}
y_p[k,\ell] &=\sqrt{P_p}
\tilde{b}[\ell-\ell_q] \tilde{h}_{pq}[(k-k_q)_N,(\ell-\ell_q)_M]\varphi_q\nonumber\\
&\hspace{2em}
+\mathcal{I}_1(k,\ell)+\mathcal{I}_2(k,\ell)+ w[k,\ell],
\end{align}
where
\vspace{-0.3em}
\begin{align*}
\tilde{b}[\ell-\ell_q] =
\left\{ \begin{array}{ll}
1,        &  \sum_{k'=0}^{k_{pq}} b[k',\ell-\ell_q]\geq 1  \\
0           &  \mbox{otherwise},\end{array} \right.
\end{align*}
is the path indicator and
\vspace{-0.4em}
\begin{align*}
\tilde{h}_{pq}[(k\!-\!k_q)_N,\!(\ell\!-\!\ell_q)_M\!] \!\!=\!\!
\!\sum_{k'=0}^{k_{pq}}\!\!
b[k'\!,\ell\!-\!\ell_q] h_{pq}[k'\!,\!\ell\!-\!\ell_q]\alpha(k,\ell,c).
\end{align*}

In~\eqref{eq:yAPpil}, $\mathcal{I}_1(k,\ell)$ denotes the interference spread from the $q$th user's data symbols due to fractional Doppler
\vspace{-0.3em}
\begin{align}~\label{eq:Ikl1}
\mathcal{I}_1(k,\ell) &\!=\!
\sqrt{\rho_u}\!
\sum_{k'=0}^{k_{pq}}
\sum_{\ell'=0}^{\ell_{pq}}
b[k',\ell']
\sum_{c\not\in \mathcal{K}}
h_{pq}[(k\!-\!k')_N,(\ell\!-\!\ell')_M]\nonumber\\
&\hspace{1em}\times
\alpha[k,\ell,c]x_{dq}[(k-k'+c)_N,(\ell-\ell')_M],
\end{align}
and  $\mathcal{I}_2(k,\ell)$ denotes the interference spread from the data symbols of other users, which can be expressed as
\vspace{-0.3em}
\begin{align}~\label{eq:Ikl2}
\mathcal{I}_2(k,\ell) &= \sqrt{\rho_u}\sum_{q'\neq q}^{K_u}
\sum_{k'=0}^{k_{pq'}}
\sum_{\ell'=0}^{\ell_{pq'}}
b[k',\ell']
\sum_{c=-N/2}^{N/2}
 h_{pq'}[k',\ell']\nonumber\\
 &\hspace{1em}\times
 \alpha(k,\ell,c)x_{dq'}[(k-k'+c)_N,(\ell-\ell')_M].
\end{align}

Assume that the number of paths, the index of delay and Doppler shifts are known by using the channel path estimation method proposed in~\cite{Flanagan:vtc:2020}. Then, by using the MMSE,  $h_{pq,i}$ can be estimated as $\hat{h}_{pq,i} =  c_{pq,i}y_p,$ where
\vspace{-0.2em}
\begin{align}~\label{eq:cpqk}
c_{pq,i}\!=\!
\frac{\sqrt{P_p}\beta_{pq,i}}
{P_p\beta_{pq,i}
	\!+\! \mathbb{E}
		\{|\mathcal{I}_1(k,\ell)|^2\} \!+\!
	  \mathbb{E}
		\{|\mathcal{I}_2(k,\ell)|^2\}\!+\!
	\Sn}.
\end{align}
By invoking~\eqref{eq:Ikl1}, and considering the fact that data symbols are independent,
$\mathbb{E}\{|x_{dq}[k',\ell']|^2\}=P_s$, and assuming independent channel gain,  we obtain
\vspace{-0.2em}
\begin{align}~\label{eq:EIk1}
\mathbb{E}\{|\mathcal{I}_1(k,\ell)|^2\}
&=
\rho_u
	\sum_{i=1}^{L_{pq}}
				\mathbb{E}
				\big\{
						\big|h_{pq,i}\big|^2
				\big\}				
					\sum_{c \notin \mathcal{K} }\big|\alpha[k,\ell,c]
				\big|^2.
\end{align}

We notice that in the delay domain, only $\ell_{pq}+1$ symbols before $\ell$ affect the received symbol in $\ell$. However, due to fractional Doppler shift, all data symbols outside the guard space $\mathcal{K}$ interfere with the received symbol on index $k$. It can be checked that for $k\in[k_q-k_{max}-\hat{k},k_q+k_{max}+\hat{k}]$ and $c \notin \mathcal{K}$,   $\big|\alpha[k,\ell,c]\big|$ becomes almost a constant. For the case of rectangular window, $\big|\alpha[k,\ell,c]\big|^2 \approx 1/N^2$~\cite{Kwan:TCOM:2021}. Therefore, we have $\sum_{c \notin \mathcal{K} }\big|\alpha[k,\ell,c]\big|^2 \approx \frac{(N-4k_{max}-4\hat{k}-1)}{N^2}$. Accordingly,~\eqref{eq:EIk1} can be expressed as
\vspace{-0.6em}
\begin{align}~\label{eq:EIk1:final}
\mathbb{E}\big\{\big|\mathcal{I}_1(k,\ell)\big|^2\big\} &\approx\frac{\rho_u(N-4k_{max}-4\hat{k}-1)}{N^2}
\sum_{i=1}^{\Lmk}\mathbb{E}\big\{\big|\hmki\big|^2\big\}.
\end{align}

By using similar steps, we can obtain
\vspace{-0.3em}
\begin{align}~\label{eq:EIk2:final}
\mathbb{E}\big\{\big|\mathcal{I}_2(k,\ell)\big|^2\big\} &\approx
\frac{\rho_u}{N}
\sum_{q'\neq q}^{K_u}
\sum_{i=1}^{\Lmk}\mathbb{E}\big\{\big|h_{pq',i}\big|^2\big\}.
\end{align}
To this end, by substituting~\eqref{eq:EIk1:final} and~\eqref{eq:EIk2:final} into~\eqref{eq:cpqk}, we get
\vspace{-0.1em}
\begin{align}~\label{eq:MMSE1:Final}
c_{pq,i} &= \frac{\sqrt{\rho_p}\beta_{pq,i}}{\rho_p\beta_{pq,i}\!
	\!+\! \rho_u \varXi+1},
\end{align}
where $\rho_p=P_p/\Sn$ is the normalized SNR of each pilot symbol and $\varXi=\frac{1}{N}\sum_{\substack{q'=1}}^{K_u}\sum_{i=1}^{L_{pq'}}\beta_{pq',i}
\! -\!\frac{(4k_{max}+4\hat{k}+1)}{N^2}\sum_{i=1}^{\Lmk}\beta_{pq,i}$.

\vspace{-0.2em}
\subsection{Downlink Payload Data Transmission}~\label{Sec:DLData}
The APs use conjugate beamforming to transmit signals to $K_u$ users. Let $s_{qr}=s_{q}[k,\ell]$, with $r=kM+\ell$, $ k\in \mathbb{N}[0,N-1], \ell\in \mathbb{N}[0,M-1]$ denote the i.i.d. DD domain information symbols to be transmitted to the $q$th user, which satisfies $s_{qr}\sim \mathcal{N}(0,1)$. Therefore, the signal transmitted   from the $p$th AP is
\vspace{-0.4em}
\begin{align}~\label{eq:xqd}
\qx_{d,p}= \sqrt{\rho_d}
\sum_{q=1}^{K_u}
\eta_{pq}^{1/2}\hat{\qH}_{pq}^\dag \qs_q,
\end{align}
where $\rho_d$ is the normalized SNR of each symbol; $\hat{\qH}_{pq}$ denotes the estimated channel matrix between the $q$th user and $p$th AP; $\qs_q\in\mathcal{C}^{MN\times 1}$ is the intended signal vector for the $q$th user; $\eta_{pq}$, $p=1,\ldots,M_a$, $q=1,\ldots,K_u$ are the power control coefficients chosen to satisfy  the  following power constraint at each AP~\cite{Hien:cellfree}
\vspace{-0.6em}
\begin{align}~\label{eq:AP:powcons}
\mathbb{E}\left\{\|\qx_{d,p}\|^2\right\} \leq \rho_d.
\end{align}

The received signal at the $q$th user in DD domain can be expressed as
\vspace{-1em}
\begin{align}~\label{eq:zqd}
\qz_{d,q}
&=\sqrt{\rho_d}\sum_{p=1}^{M_a}
\eta_{pq}^{1/2} {\qH}_{pq} \hat{\qH}_{pq}^\dag \qs_{q}\nonumber\\
&\hspace{2em}+
\sqrt{\rho_d}\sum_{p=1}^{M_a}
\sum_{q'\neq q}^{K_u}
\eta_{pq'}^{1/2}{\qH}_{pq}\hat{\qH}_{pq'}^\dag \qs_{q'} +\qw_{d,q},
\end{align}
where $\qw_{d,q}\in\mathcal{C}^{MN\times 1}$ is the AWGN vector at the user $q$.

\section{Performance Analysis}~\label{Sec:Perf}
In this section, we derive a new closed-form expression for the downlink achievable rate, using the bounding technique from~\cite{Hien:cellfree}. We further assume that the proposed low complexity DD domain detector in~\cite{Raviteja:TVT:2021} is utilized for detection of information symbols at each user.

We start by presenting some characteristics of the DD domain channel representation in~\eqref{eq:Hpq}, which will facilitate the ensuing achievable rate analysis.

\begin{lemma}\label{lemma:Tqi}
For the matrix $\qT_{pq}^{(i)}$, we have $\qT_{pq}^{(i)}\qT_{pq}^{(i)^{\dag}}=\qI_{MN}$.
\end{lemma}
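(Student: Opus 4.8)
The plan is to recognize that $\qT_{pq}^{(i)}$ is a product of four matrix factors, each of which is individually unitary, so that $\qT_{pq}^{(i)}$ itself is unitary and the claimed identity $\qT_{pq}^{(i)}\qT_{pq}^{(i)^{\dag}}=\qI_{MN}$ follows at once. I would therefore not expand any entries, but argue factor by factor and then telescope.

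First I would treat the two Kronecker factors. Since $\textbf{F}_N$ is the unitary DFT matrix, so that $\textbf{F}_N\textbf{F}_N^\dag=\textbf{I}_N$, and $\textbf{I}_M$ is trivially unitary, the mixed-product rule $(\textbf{A}\otimes\textbf{B})(\textbf{C}\otimes\textbf{D})=(\textbf{A}\textbf{C})\otimes(\textbf{B}\textbf{D})$ together with $(\textbf{A}\otimes\textbf{B})^\dag=\textbf{A}^\dag\otimes\textbf{B}^\dag$ yields $(\textbf{F}_N\otimes\textbf{I}_M)(\textbf{F}_N^\dag\otimes\textbf{I}_M)=(\textbf{F}_N\textbf{F}_N^\dag)\otimes\textbf{I}_M=\qI_{MN}$. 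Hence these two factors are unitary and mutually inverse.

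Next I would dispatch the two inner factors. The matrix $\boldsymbol{\Pi}$ is a permutation matrix, hence real orthogonal, so that $\boldsymbol{\Pi}\boldsymbol{\Pi}^\dag=\qI_{MN}$ and consequently $\boldsymbol{\Pi}^{\Lmki}(\boldsymbol{\Pi}^{\Lmki})^\dag=\qI_{MN}$ for any integer power. The one step that deserves care is the diagonal factor $\boldsymbol{\Delta}^{\Kmki+\kpmki}$, whose exponent is in general non-integer because of the fractional Doppler $\kpmki\in(-0.5,0.5)$. Interpreting the power entrywise, its $n$th diagonal entry is $z^{n(\Kmki+\kpmki)}$ with $z=e^{j2\pi/(MN)}$; since $|z|=1$, each such entry has unit modulus for any real exponent, so $\boldsymbol{\Delta}^{\Kmki+\kpmki}(\boldsymbol{\Delta}^{\Kmki+\kpmki})^\dag=\qI_{MN}$.

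Finally I would assemble the pieces. Writing $\qT_{pq}^{(i)^{\dag}}$ by reversing the order of the four factors and conjugate-transposing each, the product $\qT_{pq}^{(i)}\qT_{pq}^{(i)^{\dag}}$ telescopes from the centre outward: the adjacent pair $(\textbf{F}_N^\dag\otimes\textbf{I}_M)(\textbf{F}_N\otimes\textbf{I}_M)$ collapses to $\qI_{MN}$, then the diagonal pair, then the permutation pair, and finally the outer Kronecker pair, each using the unitarity just established, leaving $\qI_{MN}$. The only genuine subtlety is the fractional exponent on $\boldsymbol{\Delta}$; once one observes that unit modulus is preserved under arbitrary real powers, the remainder is a routine telescoping of unitary factors.
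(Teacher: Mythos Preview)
Your argument is correct: each of the four factors of $\qT_{pq}^{(i)}$ is unitary (the Kronecker factors via the mixed-product rule and unitarity of $\textbf{F}_N$, the permutation power trivially, and the diagonal factor because every diagonal entry of $\boldsymbol{\Delta}^{\Kmki+\kpmki}$ has unit modulus for any real exponent), so the product is unitary and the lemma follows. The paper itself does not prove this lemma---it relegates the proof to the journal version---so there is nothing to compare against; your factor-by-factor unitarity argument is the natural and expected route.
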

\begin{proof}
	The proof is relegated to the journal version.
\end{proof}

\begin{lemma}\label{lemma:TqTqp}
For any two different paths with different delay indices, we have $\Big[\qT_{pq}^{(i)} \qT_{pq'}^{(j)^\dag}\Big]_{(r,r)} =
0$ for $(\ell_{pq,i}-\ell_{pq,j})_M\neq 0$.
\end{lemma}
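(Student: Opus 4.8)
The plan is to expand the product, cancel the inner transform pair by unitarity, and then read the diagonal off the resulting shift--diagonal--shift structure. First I would substitute the definitions, writing $\mathbf{U}\triangleq\textbf{F}_N\otimes\textbf{I}_M$, which is unitary. Using $(\boldsymbol{\Delta}^{a})^\dag=\boldsymbol{\Delta}^{-a}$ (the diagonal entries of $\boldsymbol{\Delta}$ have unit modulus) and $\boldsymbol{\Pi}^\dag=\boldsymbol{\Pi}^{-1}$, the adjoint factor reads $\qT_{pq'}^{(j)^\dag}=\mathbf{U}\,\boldsymbol{\Delta}^{-(k_{pq',j}+\kappa_{pq',j})}\boldsymbol{\Pi}^{-\ell_{pq',j}}\,\mathbf{U}^\dag$. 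Multiplying on the left by $\qT_{pq}^{(i)}=\mathbf{U}\,\boldsymbol{\Pi}^{\ell_{pq,i}}\boldsymbol{\Delta}^{k_{pq,i}+\kappa_{pq,i}}\,\mathbf{U}^\dag$ and invoking $\mathbf{U}^\dag\mathbf{U}=\qI_{MN}$, the two inner transform factors cancel and the product collapses to
\[
\qT_{pq}^{(i)}\qT_{pq'}^{(j)^\dag}=\mathbf{U}\,\mathbf{B}\,\mathbf{U}^\dag,\qquad \mathbf{B}\triangleq\boldsymbol{\Pi}^{\ell_{pq,i}}\boldsymbol{\Delta}^{\vartheta}\boldsymbol{\Pi}^{-\ell_{pq',j}},
\]
with $\vartheta\triangleq(k_{pq,i}+\kappa_{pq,i})-(k_{pq',j}+\kappa_{pq',j})$.

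Next I would pin down the support of $\mathbf{B}$. Since $\boldsymbol{\Pi}$ is the length-$MN$ cyclic-shift permutation, $\boldsymbol{\Pi}^{a}$ acts as $\mathbf{e}_t\mapsto\mathbf{e}_{(t+a)_{MN}}$, so conjugating the diagonal $\boldsymbol{\Delta}^{\vartheta}$ by the two shifts produces a matrix supported on a single cyclic off-diagonal, namely
\[
[\mathbf{B}]_{s,t}=\big[\boldsymbol{\Delta}^{\vartheta}\big]_{(t-\ell_{pq',j})_{MN}}\,\mathbbm{1}\!\left\{s=(t+\ell_{pq,i}-\ell_{pq',j})_{MN}\right\}.
\]
Thus $[\mathbf{B}]_{s,t}\neq0$ only when the row--column offset satisfies $s-t\equiv\ell_{pq,i}-\ell_{pq',j}\pmod{MN}$.

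Finally I would compute the diagonal through the Kronecker structure. Indexing each coordinate as $r=nM+m$ with $n\in\mathbb{N}[0,N-1]$ the Doppler index and $m\in\mathbb{N}[0,M-1]$ the delay index, the $\textbf{I}_M$ factor inside $\mathbf{U}$ forces the delay indices of the row and column of $\mathbf{B}$ that contribute to $[\mathbf{U}\mathbf{B}\mathbf{U}^\dag]_{(r,r)}$ to both equal $m$; hence only entries of $\mathbf{B}$ whose row--column offset is a multiple of $M$ can survive. But from the previous step the only nonzero off-diagonal of $\mathbf{B}$ sits at offset $\ell_{pq,i}-\ell_{pq',j}$, and since both delay indices lie in $\mathbb{N}[0,M-1]$ this offset lies in $(-M,M)$; reduced modulo $MN$ it equals a multiple of $M$ \emph{only} when $(\ell_{pq,i}-\ell_{pq',j})_M=0$. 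Consequently, whenever $(\ell_{pq,i}-\ell_{pq',j})_M\neq0$ every surviving term vanishes and $\big[\qT_{pq}^{(i)}\qT_{pq'}^{(j)^\dag}\big]_{(r,r)}=0$. The step I expect to be most delicate is precisely this modular bookkeeping: I must carry the offset modulo $MN$ (not modulo $M$) across the $N$-point DFT and then use the strict bound $|\ell_{pq,i}-\ell_{pq',j}|<M$ to rule out any wrap-around that could accidentally realign the off-diagonal with a multiple of $M$.
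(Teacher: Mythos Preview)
The paper does not actually prove this lemma; its proof is ``relegated to the journal version,'' so there is nothing to compare against directly. Your argument is correct and is the natural one: collapse the inner $\mathbf{U}^\dag\mathbf{U}$ by unitarity, recognise $\mathbf{B}=\boldsymbol{\Pi}^{\ell_{pq,i}}\boldsymbol{\Delta}^{\vartheta}\boldsymbol{\Pi}^{-\ell_{pq',j}}$ as supported on the single cyclic off-diagonal at offset $\ell_{pq,i}-\ell_{pq',j}$ modulo $MN$, and then use the $\textbf{I}_M$ factor in $\mathbf{U}=\textbf{F}_N\otimes\textbf{I}_M$ to force the row and column indices of any contributing entry of $\mathbf{B}$ to agree modulo $M$.

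One small simplification: the ``delicate'' wrap-around step you flag is in fact automatic. Because $M\mid MN$, the support condition $s\equiv t+(\ell_{pq,i}-\ell_{pq',j})\pmod{MN}$ can be reduced modulo $M$ exactly, giving $s\equiv t+(\ell_{pq,i}-\ell_{pq',j})\pmod{M}$; combined with $s\equiv t\equiv m\pmod{M}$ this forces $(\ell_{pq,i}-\ell_{pq',j})_M=0$ without ever invoking the bound $|\ell_{pq,i}-\ell_{pq',j}|<M$. So the modular bookkeeping is clean and no accidental realignment can occur.
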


\begin{proof}
	The proof is relegated to the journal version.
\end{proof}

\begin{lemma}\label{lemma:TqTqp:abs}
 For any two matrices of $\qT_{pq}^{(i)}$ and $\qT_{pq'}^{(j)^\dag}$ in~\eqref{eq:Hpq}, we have $\Big|\sum_{r'=1}^{MN} \big[\qT_{pq}^{(i)} \qT_{pq'}^{(j)^\dag}\big]_{(r,r')}\Big|^2=1$, $\forall q, q', i, j$.
\end{lemma}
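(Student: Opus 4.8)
The plan is to reduce the claimed row-sum identity to a statement about a single, explicitly computable vector. First I would write the product as $\qA \triangleq \qT_{pq}^{(i)} \qT_{pq'}^{(j)^\dag}$ and simplify it. Using the definition in~\eqref{eq:Hpq}, and the facts that $\boldsymbol{\Delta}$ has unit-modulus diagonal entries while $\boldsymbol{\Pi}$ is a real permutation matrix, the adjoint reads $\qT_{pq'}^{(j)^\dag}=(\textbf{F}_N \otimes \textbf{I}_M)\boldsymbol{\Delta}^{-(k_{pq',j}+\kappa_{pq',j})}\boldsymbol{\Pi}^{-\ell_{pq',j}}(\textbf{F}_N^\dag \otimes \textbf{I}_M)$. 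Since $(\textbf{F}_N^\dag \otimes \textbf{I}_M)(\textbf{F}_N \otimes \textbf{I}_M)=\qI_{MN}$, the two inner DFT factors cancel, leaving
\begin{align*}
\qA = (\textbf{F}_N \otimes \textbf{I}_M)\,\boldsymbol{\Pi}^{\ell_{pq,i}}\,\boldsymbol{\Delta}^{\Delta k}\,\boldsymbol{\Pi}^{-\ell_{pq',j}}(\textbf{F}_N^\dag \otimes \textbf{I}_M),
\end{align*}
where $\Delta k \triangleq (k_{pq,i}+\kappa_{pq,i})-(k_{pq',j}+\kappa_{pq',j})$.

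Next I would observe that the quantity in the statement is exactly the $r$th entry of $\qA\mathbf{1}_{MN}$, with $\mathbf{1}_{MN}$ the all-ones vector, so it suffices to show that every entry of $\qA\mathbf{1}_{MN}$ has unit modulus. Writing $\mathbf{1}_{MN}=\mathbf{1}_N\otimes\mathbf{1}_M$ and using $\textbf{F}_N^\dag\mathbf{1}_N=\sqrt{N}\,\mathbf{e}_0$ (only the zero-frequency bin survives the DFT of a constant), I obtain $(\textbf{F}_N^\dag \otimes \textbf{I}_M)\mathbf{1}_{MN}=\sqrt{N}\,(\mathbf{e}_0\otimes\mathbf{1}_M)$, the scaled indicator of the first length-$M$ block.

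The core step is to propagate this vector through the remaining factors. Recalling that $\boldsymbol{\Pi}$ is the length-$MN$ cyclic-shift matrix, the permutations $\boldsymbol{\Pi}^{-\ell_{pq',j}}$ and $\boldsymbol{\Pi}^{\ell_{pq,i}}$ only rotate the set of $M$ consecutive nonzero positions (cyclically), while $\boldsymbol{\Delta}^{\Delta k}$ multiplies each surviving entry by a unit-modulus phase $e^{j2\pi n\Delta k/MN}$. Hence $\mathbf{g}\triangleq\boldsymbol{\Pi}^{\ell_{pq,i}}\boldsymbol{\Delta}^{\Delta k}\boldsymbol{\Pi}^{-\ell_{pq',j}}(\mathbf{e}_0\otimes\mathbf{1}_M)$ is supported on $M$ cyclically consecutive indices, with unit-modulus entries there, and $\qA\mathbf{1}_{MN}=\sqrt{N}\,(\textbf{F}_N \otimes \textbf{I}_M)\mathbf{g}$. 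Applying $(\textbf{F}_N \otimes \textbf{I}_M)$, which acts through $\textbf{F}_N$ on the block index and through $\textbf{I}_M$ on the within-block index, the $r$th entry with $r=\tilde n M+\tilde m$ equals $\sum_{n'=0}^{N-1} e^{-j2\pi \tilde n n'/N}\mathbf{g}_{n'M+\tilde m}$, the leading $\sqrt{N}$ cancelling the $1/\sqrt{N}$ of $\textbf{F}_N$.

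The crux, and the step I expect to require the most care, is the elementary but decisive observation that any $M$ cyclically consecutive integers form a complete residue system modulo $M$. Consequently, for each fixed within-block index $\tilde m$, exactly one index $n'M+\tilde m$ lies in the support of $\mathbf{g}$, so the sum collapses to a single unit-modulus summand multiplied by a unit-modulus DFT phase. This yields $\big|[\qA\mathbf{1}_{MN}]_r\big|=1$ for every $r$, equivalently $\big|\sum_{r'=1}^{MN}[\qT_{pq}^{(i)}\qT_{pq'}^{(j)^\dag}]_{(r,r')}\big|^2=1$, which is the assertion. I would emphasize that no hypothesis relating the two delay indices enters the argument, so the identity holds for all $q,q',i,j$; the delay indices lying in $\mathbb{N}[0,M-1]$ merely guarantees that the rotated support genuinely consists of $M$ distinct positions.
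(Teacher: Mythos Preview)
Your argument is correct. The paper itself does not give a proof of this lemma---it is relegated to the journal version---so a direct comparison of approaches is not possible here. Your route, namely rewriting the row sum as the $r$th entry of $\qA\mathbf{1}_{MN}$, collapsing the inner DFT factors to obtain $\qA=(\textbf{F}_N\otimes\textbf{I}_M)\boldsymbol{\Pi}^{\ell_{pq,i}}\boldsymbol{\Delta}^{\Delta k}\boldsymbol{\Pi}^{-\ell_{pq',j}}(\textbf{F}_N^\dag\otimes\textbf{I}_M)$, and then exploiting $(\textbf{F}_N^\dag\otimes\textbf{I}_M)\mathbf{1}_{MN}=\sqrt{N}\,(\mathbf{e}_0\otimes\mathbf{1}_M)$, is clean and complete. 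The crucial observation that any $M$ cyclically consecutive indices in $\{0,\ldots,MN-1\}$ form a complete residue system modulo $M$ (valid because $MN$ is a multiple of $M$, so the wrap-around does not disturb residues) is exactly what makes the final block-DFT pick out a single unit-modulus summand for every output coordinate $r=\tilde nM+\tilde m$. One minor remark: your closing sentence, that the delay indices lying in $\mathbb{N}[0,M-1]$ are needed to guarantee $M$ distinct support positions, is superfluous---every integer power of the permutation $\boldsymbol{\Pi}$ is a bijection on $\{0,\ldots,MN-1\}$ and therefore preserves the cardinality of the support regardless of the shift amount, so the identity holds for arbitrary integer delay indices.
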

\begin{proof}
	The proof is relegated to the journal version.
\end{proof}

\begin{lemma}\label{lemma:power}
The power constraint in~\eqref{eq:AP:powcons}, is expressed as
\vspace{-0.0em}
\begin{align}~\label{eq:dlpowercont}
\sum_{q=1}^{K_u}\sum_{i=1}^{\Lmk}
\eta_{pq}\gamma_{pq,i}\leq 1,
\end{align}
where $\gamma_{pq,i} \triangleq \mathbb{E}\big\{|\hat{h}_{pq,i}|^2\big\} =\sqrt{\rho_p} \beta_{pq,i}c_{pq,i}$.
\end{lemma}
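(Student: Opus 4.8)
The plan is to evaluate the left-hand side of~\eqref{eq:AP:powcons} directly from the beamformed transmit signal~\eqref{eq:xqd}, and then to simplify the resulting matrix expectation using Lemmas~\ref{lemma:Tqi} and~\ref{lemma:TqTqp}. First I would substitute~\eqref{eq:xqd} and expand the squared norm as a double sum over users,
\begin{align*}
\mathbb{E}\!\left\{\|\qx_{d,p}\|^2\right\}
=\rho_d\sum_{q=1}^{K_u}\sum_{q'=1}^{K_u}
\eta_{pq}^{1/2}\eta_{pq'}^{1/2}\,
\mathbb{E}\!\left\{\qs_{q}^\dag\hat{\qH}_{pq}\hat{\qH}_{pq'}^\dag\qs_{q'}\right\}.
\end{align*}
Because the data symbols are i.i.d., zero-mean, mutually independent across users, and independent of the channel estimates, conditioning on $\{\hat{\qH}_{pq}\}$ and using $\mathbb{E}\{\qs_{q}\qs_{q'}^\dag\}=\delta_{qq'}\qI_{MN}$ (after the per-frame symbol-energy normalization) annihilates every $q\neq q'$ term, leaving $\mathbb{E}\{\|\qx_{d,p}\|^2\}=\rho_d\sum_{q}\eta_{pq}\,\mathbb{E}\{\Trace(\hat{\qH}_{pq}\hat{\qH}_{pq}^\dag)\}$.

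Next I would expand the estimated channel in the path basis, $\hat{\qH}_{pq}=\sum_{i=1}^{\Lmk}\hmkihat\,\qT_{pq}^{(i)}$ analogously to~\eqref{eq:Hpq}, which gives
\begin{align*}
\Trace\!\big(\hat{\qH}_{pq}\hat{\qH}_{pq}^\dag\big)
=\sum_{i=1}^{\Lmk}\sum_{j=1}^{\Lmk}
\hmkihat\,\hmkjhatc\,
\Trace\!\big(\qT_{pq}^{(i)}\qT_{pq}^{(j)^\dag}\big).
\end{align*}
Lemma~\ref{lemma:Tqi} collapses each diagonal term to $\Trace(\qT_{pq}^{(i)}\qT_{pq}^{(i)^\dag})=\Trace(\qI_{MN})$, a fixed normalization factor, while Lemma~\ref{lemma:TqTqp} forces $\Trace(\qT_{pq}^{(i)}\qT_{pq}^{(j)^\dag})=\sum_{r}[\qT_{pq}^{(i)}\qT_{pq}^{(j)^\dag}]_{(r,r)}=0$ for every off-diagonal pair $i\neq j$. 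Since the estimates of distinct paths are uncorrelated, taking the remaining expectation leaves $\mathbb{E}\{\Trace(\hat{\qH}_{pq}\hat{\qH}_{pq}^\dag)\}$ proportional to $\sum_{i=1}^{\Lmk}\mathbb{E}\{|\hmkihat|^2\}=\sum_{i=1}^{\Lmk}\gamma_{pq,i}$, with $\gamma_{pq,i}=\sqrt{\rho_p}\beta_{pq,i}c_{pq,i}$ inherited from the MMSE estimate~\eqref{eq:MMSE1:Final}. Imposing $\mathbb{E}\{\|\qx_{d,p}\|^2\}\leq\rho_d$ and cancelling the common factor then yields $\sum_{q=1}^{K_u}\sum_{i=1}^{\Lmk}\eta_{pq}\gamma_{pq,i}\leq1$, as claimed.

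The symbol-expectation bookkeeping and the identification of $\gamma_{pq,i}$ are routine; the step I expect to be the real obstacle is justifying that \emph{all} off-diagonal path cross-terms vanish. Lemma~\ref{lemma:TqTqp} disposes of pairs whose delay taps differ, i.e. $(\Lmki-\Lmkj)_M\neq 0$, but two paths that share a delay tap and differ only in Doppler fall outside its hypothesis. For those I would have to show separately that the trace reduces to a geometric sum of Doppler phase rotations that cancels exactly for integer Doppler separations and is negligible under fractional Doppler, or else invoke a distinct-delay-tap assumption on the channel paths so that Lemma~\ref{lemma:TqTqp} alone suffices. Pinning down the precise scalar normalization so that each surviving path contributes exactly $\gamma_{pq,i}$ (rather than a multiple of it) is the accompanying bookkeeping detail to verify.
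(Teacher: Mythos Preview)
Your proposal is essentially the paper's argument, and it is correct; the only difference is the order in which you invoke independence of the path gains versus the structural lemmas. The paper writes
\[
\mathbb{E}\!\left\{\Trace\big(\hat{\qH}_{pq}^\dag\hat{\qH}_{pq}\big)\right\}
=\sum_{i=1}^{\Lmk}\sum_{j=1}^{\Lmk}
\mathbb{E}\!\left\{\hat h_{pq,i}^*\hat h_{pq,j}\right\}
\Trace\!\big(\qT_{pq}^{(i)^\dag}\qT_{pq}^{(j)}\big),
\]
and kills every $i\neq j$ term immediately from $\mathbb{E}\{\hat h_{pq,i}^*\hat h_{pq,j}\}=0$ (zero-mean, independent path gains), then applies Lemma~\ref{lemma:Tqi} to the surviving diagonal terms. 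Lemma~\ref{lemma:TqTqp} is never used here.

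So the ``real obstacle'' you flag is not an obstacle at all: you already have the right tool in hand when you say ``the estimates of distinct paths are uncorrelated,'' but you invoke it only after trying to make the deterministic traces vanish via Lemma~\ref{lemma:TqTqp}. Swap the order---take the expectation first---and the cross terms disappear for \emph{all} $i\neq j$, including pairs that share a delay tap, with no need for a separate Doppler-phase or distinct-delay argument. Your normalization concern is also minor: the paper simply reports the final per-symbol constraint, absorbing the common $MN$ factor coming from $\Trace(\qI_{MN})$.
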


\begin{proof}
By invoking~\eqref{eq:xqd}, we have
\begin{align}~\label{eq:AP:powcons:final0}
&\mathbb{E}\left\{\|\qx_{d,p}\|^2\right\}=
\mathbb{E}
\bigg\{
\Big\|
\sqrt{\rho_d}
\sum_{q=1}^{K_u}
\eta_{pq}^{1/2}\hat{\qH}_{pq}^\dag \qs_q
\Big\|^2
\bigg\}
\nonumber\\
&\hspace{1em}\!=\!
\rho_d\mathbb{E}
\bigg\{\! \Trace
\bigg(
\Big[\sum_{q=1}^{K_u}\eta_{pq}^{1/2}\hat{\qH}_{pq}^\dag \qs_q
\Big]\!\!
\Big[\sum_{q'=1}^{K_u}\eta_{pq'}^{1/2}\qs_{q'}^\dag\hat{\qH}_{pq'}
\Big]
\bigg)
\!\bigg\}
\nonumber\\
&\hspace{1em}
=
\rho_d\sum_{q=1}^{K_u}
\eta_{pq}\Trace\left(\mathbb{E}\big\{\hat{\qH}_{pq}^\dag \hat{\qH}_{pq}\big\}\right),
\end{align}
where the equality holds since $s_{q,r}$ are assumed to be i.i.d. RVs. By using~\eqref{eq:Hpq}, and noticing that channel gains of different path are zero-mean i.i.d RVs, i.e., $\mathbb{E}\big\{\hat{h}_{pq,i}^* \hat{h}_{pq,j}\big\}=0$, $\forall i\neq j$, and then applying Lemma~\ref{lemma:Tqi}  we get $\mathbb{E}\left\{\|\qx_{d,q}\|^2\right\}= \rho_d\sum_{q=1}^{K_u}\sum_{i=1}^{\Lmk} \eta_{pq}\gamma_{pq,i}$ which completes the proof.
\end{proof}

\vspace{1em}
We assume that each user has knowledge of the channel statistics in DD domain but not of the channel realizations~\cite{Hien:cellfree}. The signal received at the $q$th user~\eqref{eq:zqd} can be re-arranged to be suitable for detection of the $r$th entry of the received signal in DD domain with only statistical channel knowledge at users, given by~\eqref{eq:zqr} at the top of the next page.
\bigformulatop{22}{
\begin{align}~\label{eq:zqr}
z_{d,qr}
\!=&\!
\sqrt{\rho_d}
\underbrace{
	\mathbb{E}\Big\{
	\!\sum_{p=1}^{M_a}\!
	\eta_{pq}^{1/2}[{\qH}_{pq}]_{(r,:)}
	[\hat{\qH}_{pq}^\dag]_{(:,r)}\!\Big\}}_{\text{desired signal}\triangleq\mathbb{DS}_q}s_{qr}
   \!+\!
\sqrt{\rho_d}
\underbrace{
	\Bigg(\!\sum_{p=1}^{M_a}\!
	\eta_{pq}^{1/2}
	[{\qH}_{pq}]_{(r,:)}
	[\hat{\qH}_{pq}^\dag]_{(:,r)}\!-\!\mathbb{E}\Big\{\sum_{p=1}^{M_a}
	\eta_{pq}^{1/2}
	[{\qH}_{pq}]_{(r,:)\!}
	[\hat{\qH}_{pq}^\dag]_{(:,r)}\!\Big\}\!\Bigg)}_{\text{precoding gain uncertainty}\triangleq \mathbb{BU}_q}s_{qr}\nonumber\\
&\hspace{1em}
\! +\!
\underbrace{\sqrt{\rho_d}\sum_{p=1}^{M_a}
	\sum_{r'\neq r}^{MN}\eta_{pq}^{1/2}
	[{\qH}_{pq}]_{(r,:)} [\hat{\qH}_{pq}^\dag]_{(:,r')}s_{qr'}}_{\text{inter-symbol interference}\triangleq \mathbb{I}_{q,1} }
+
\underbrace{\sqrt{\rho_d}\sum_{p=1}^{M_a}\sum_{q'\neq q}^{K_u}
	\sum_{\substack{r'=1 }}^{MN}\eta_{pq'}^{1/2}
	[{\qH}_{pq}]_{(r,:)} [\hat{\qH}_{pq'}^\dag]_{(:,r')} s_{q'r'}}_{\text{inter-user interference}\triangleq\mathbb{I}_{q,2} } +
\underbrace{w_{d,qr}}_\text{AWGN}.
\end{align}
}

The sum of second, third, forth and last term in~\eqref{eq:zqr} are treated as ``effective noise". Since $s_{qr}$, $r=1,\ldots,MN$ are i.i.d. RVs, it can be readily shown that the effective noise and desired signal are uncorrelated~\cite{Hien:cellfree}. Hence, by using the fact that uncorrelated Gaussian noise represents the worst case, we obtain the following achievable rate of the $q$th user for OTFS-based cell-free operation.

\begin{theorem}~\label{Prop:DL:rate}
An achievable downlink rate of the transmission from the APs to the $q$th user for any finite $M_a$ and $K_u$, is given by~\eqref{eq:Rdq:final} at the top of the next page, where
\bigformulatop{23}{
\begin{align}~\label{eq:Rdq:final}
&R_{d,q}
=\frac{1}{MN}
\sum_{r=1}^{MN}
  \log_2
     \left(1\!+\! \frac
{\rho_{d}\left(\sum_{p=1}^{M_a}	\sum_{i=1}^{\Lmk} \eta_{pq}^{1/2}\gamma_{pq,i} \right)^2}
{  \rho_{d}
\Big(
\sum_{p=1}^{M_a}
\eta_{pq}
\sum_{i=1}^{\Lmk}
\beta_{pq,i}
\Big(\sum_{j=1}^{\Lmk}
\gamma_{pq,j}(\chi_{q,ij} +\kappa_{q,ij})
+\sum_{q'\neq q}^{K_u}
\sum_{j=1}^{\Lmkp}\!
\frac{\eta_{pq'}}{\eta_{pq}}\gamma_{pq',j}
\Big)
+
1}\right),
\end{align}
}
$\chi_{pq,ij}=\big|[\qT_{pq}^{(i)}\qT_{pq}^{(j)^\dag}]_{(r,r)}\big|^2$, while $\kappa_{pq,ij}=\big|\sum_{r'\neq r}^{MN}\big[\qT_{pq}^{(i)}\qT_{pq}^{(j)^\dag}\big]_{(r,r')}\big|^2$.
\end{theorem}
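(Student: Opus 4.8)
The plan is to apply the standard use-and-then-forget bound. Since the symbols $s_{qr}$, $r=1,\dots,MN$, are i.i.d.\ and, as noted above, the desired term $\mathbb{DS}_q$ is uncorrelated with every remaining term of~\eqref{eq:zqr}, I treat $\sqrt{\rho_d}\,\mathbb{BU}_q s_{qr}+\mathbb{I}_{q,1}+\mathbb{I}_{q,2}+w_{d,qr}$ as an effective Gaussian noise whose power equals the sum of the individual powers. The achievable rate then reads $R_{d,q}=\frac{1}{MN}\sum_{r=1}^{MN}\log_2(1+\mathrm{SINR}_{qr})$, with $\mathrm{SINR}_{qr}=\rho_d|\mathbb{DS}_q|^2\big/\big(\rho_d\,\mathbb{E}\{|\mathbb{BU}_q|^2\}+\mathbb{E}\{|\mathbb{I}_{q,1}|^2\}+\mathbb{E}\{|\mathbb{I}_{q,2}|^2\}+1\big)$, the additive $1$ being the normalized AWGN power. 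It therefore remains to evaluate these four quantities and regroup them into~\eqref{eq:Rdq:final}.

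First I would compute the desired signal. Expanding $\qH_{pq}$ and $\hat{\qH}_{pq}$ via~\eqref{eq:Hpq} gives $[\qH_{pq}]_{(r,:)}[\hat{\qH}_{pq}^\dag]_{(:,r)}=\sum_{i,j}\hmki\hmkjhatc[\qT_{pq}^{(i)}\qT_{pq}^{(j)^\dag}]_{(r,r)}$. Taking expectations, path independence removes every $i\neq j$ cross term, and MMSE orthogonality yields $\mathbb{E}\{\hmki\hmkihatc\}=\mathbb{E}\{|\hmkihat|^2\}=\gamma_{pq,i}$; Lemma~\ref{lemma:Tqi} fixes the surviving diagonal factor to $[\qT_{pq}^{(i)}\qT_{pq}^{(i)^\dag}]_{(r,r)}=[\qI_{MN}]_{(r,r)}=1$. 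Hence $\mathbb{DS}_q=\sum_{p=1}^{M_a}\eta_{pq}^{1/2}\sum_{i=1}^{\Lmk}\gamma_{pq,i}$, which produces the numerator of~\eqref{eq:Rdq:final}.

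The core of the work is the three noise powers, each of which factors as a sum over $p$ by AP independence. For $\mathbb{E}\{|\mathbb{BU}_q|^2\}$ and $\mathbb{E}\{|\mathbb{I}_{q,1}|^2\}$ I would expand the products into fourth-order moments $\mathbb{E}\{\hmki\hmkjhatc h_{pq,k}^*\hat{h}_{pq,l}\}$ and apply Isserlis' rule for jointly circular Gaussians. Because the pseudo-covariances $\mathbb{E}\{h\hat{h}\}$ vanish, only two pairings survive: a mean-type contribution weighted by $\gamma_{pq,i}\gamma_{pq,k}$ and a scattering contribution weighted by $\beta_{pq,i}\gamma_{pq,j}$. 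For $\mathbb{BU}_q$ the mean-type part equals $|\mathbb{DS}_q|^2$ on the diagonal and is cancelled by the mean subtraction, leaving $\sum_p\eta_{pq}\sum_{i,j}\beta_{pq,i}\gamma_{pq,j}\chi_{pq,ij}$, in which Lemma~\ref{lemma:TqTqp} makes $\chi_{pq,ij}$ vanish whenever paths $i,j$ have distinct delay taps. For $\mathbb{I}_{q,1}$ the mean-type part vanishes off the diagonal since $\qI_{MN}$ has zero off-diagonal entries (Lemma~\ref{lemma:Tqi}), leaving the off-diagonal scattering contribution gathered into $\kappa_{pq,ij}$. The inter-user power is simpler: for $q'\neq q$ the gains $\hmki$ and $\hat{h}_{pq',j}$ are independent, so only the $\beta_{pq,i}\gamma_{pq',j}$ term remains, and summing over all $r'$ collapses the matrix factor to unity by Lemma~\ref{lemma:TqTqp:abs}, giving $\mathbb{E}\{|\mathbb{I}_{q,2}|^2\}=\rho_d\sum_p\sum_{q'\neq q}^{K_u}\eta_{pq'}\big(\sum_i\beta_{pq,i}\big)\big(\sum_j\gamma_{pq',j}\big)$.

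Finally I would pull out the common factor $\rho_d\sum_{p=1}^{M_a}\eta_{pq}\sum_{i=1}^{\Lmk}\beta_{pq,i}$, normalize the inter-user term by $\eta_{pq'}/\eta_{pq}$ so that the three powers assemble into the bracket of~\eqref{eq:Rdq:final}, append the unit AWGN, and substitute into $\mathrm{SINR}_{qr}$. I expect the main obstacle to be the fourth-moment bookkeeping: keeping the MMSE estimate/error split consistent, verifying that the circular pseudo-covariances drop out, and---most delicately---tracking which portion of each double sum over the entries of $\qT_{pq}^{(i)}\qT_{pq'}^{(j)^\dag}$ condenses into the diagonal quantity $\chi_{pq,ij}$ as opposed to the off-diagonal quantity $\kappa_{pq,ij}$, for which Lemmas~\ref{lemma:Tqi}--\ref{lemma:TqTqp:abs} are precisely the tools that render the otherwise intractable entrywise sums explicit.
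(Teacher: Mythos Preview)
Your proposal follows the same path as the paper: the use-and-then-forget bound applied to the decomposition~\eqref{eq:zqr}, followed by closed-form evaluation of $\mathbb{DS}_q$, $\mathbb{BU}_q$, $\mathbb{I}_{q,1}$, and $\mathbb{I}_{q,2}$ using path independence, the MMSE estimate/error split, and Lemmas~\ref{lemma:Tqi}--\ref{lemma:TqTqp:abs} at exactly the same junctures. The only cosmetic difference is that you package the fourth-order Gaussian moment calculations under Isserlis' rule, whereas the paper carries them out by hand via $h=\hat h+\varepsilon$ and an $X+Y$ independence argument; the two are equivalent.
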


\bigformulatop{35}{
\begin{align}~\label{eq:Rdq:final:asymp}
R_{d,q} &=
  \log_2
     \left(1+\frac{\rho_{d}\left(\sum_{p=1}^{M_a}	\sum_{i=1}^{\Lmk} \eta_{pq}^{1/2}\gamma_{pq,i} \right)^2}
     {\rho_{d}
	\sum_{p=1}^{M_a}
	\eta_{pq}
	\sum_{i=1}^{\Lmk}
	\beta_{pq,i}
   \Big(
	\sum_{j=1}^{\Lmk}\!
	\gamma_{pq,j} + \sum_{q'\neq q}^{K_u}
	\sum_{j=1}^{\Lmkp}\!
	\frac{\eta_{pq'}}{\eta_{pq}}\gamma_{pq',j}\Big)
	+
	1}\right).
\end{align}}
\begin{proof}
 Since the channel model in~\eqref{eq:Hpq} consists of simply $MN$ parallel channels, an achievable rate per
channel input at the $q$th user is given by
\vspace{-0.0em}
\setcounter{equation}{24}
\begin{align}~\label{eq:Rdq}
R_{d,q} = \frac{\sum_{r=1}^{MN} I_{q,r} (\mathtt{SINR}_{d,r})}{MN},
\end{align}
with $I_{q,r} (\mathtt{SINR}_{d,r})= \log_2(1+\mathtt{SINR}_{d,r})$, and $\mathtt{SINR}_{d,r}$, i.e., the signal-to-interference-plus-noise ratio (SINR) at the $q$th user is\footnote{We note that the channel overhead is ignored as the uplink achievable rate is not considered in this paper. We emphasize that this assumption does not affect the insights obtained in our work.}
\vspace{-0.0em}
\begin{align}~\label{eq:SINRdq}
\mathtt{SINR}_{d,r}\!=\!
\frac{
	\big|\mathbb{DS}_q
	\big|^2
     }
 {
 	\mathbb{E}
	\big\{ \big|\mathbb{BU}_q\big|^2 \big\}
	\! \!+\!
	 \mathbb{E}
	 \big\{\big|\mathbb{I}_{q,1}\big|^2\big\}
	 \!+\!\!
	 \mathbb{E}
	 \big\{\big|\mathbb{I}_{q,2}\big|^2\big\}
	 \! +\!\!
1/\rho_d    }.
\end{align}

We now proceed to derive the achievable rate $R_{d,q}$. Noticing that $\qH_{pq}=\sum_{i=1}^{\Lmk}\!
\hmki \qT_{pq}^{(i)}$ and $\hat{\qH}_{pq}=\sum_{i=1}^{\Lmk}\!\hat{h}_{pq,i} \qT_{pq}^{(i)}$, the term in the numerator of~\eqref{eq:SINRdq} can be derived as
\vspace{2em}
\begin{align}~\label{eq:DS}
\mathbb{DS}_q
&\stackrel{(a)}{=}\!
      \sum_{p=1}^{M_a}\!
	            \eta_{pq}^{1/2}
				\mathbb{E}\Bigg\{\!
								\!\Big(
								\sum_{i=1}^{\Lmk}\!
								\hmkihat [\qT_{pq}^{(i)}]_{(r,:)}
								\!\Big) \!
								\Big(\!
								\sum_{j=1}^{\Lmk}\!
								\hmkjhatc [\qT_{pq}^{(j)^\dag}]_{(:,r)}
								\!\Big)
							\!\Bigg\}
		 \nonumber\\
&\hspace{0em}\stackrel{(b)}{=}
		\sum_{p=1}^{M_a}
		\eta_{pq}^{1/2}
		\bigg(
			\sum_{i=1}^{\Lmk}
			\mathbb{E}\left\{
							|\hmkihat|^2
							\Big| [\qT_{pq}^{(i)}]_{(r,r)}\Big|^2
						\right\}
		\bigg)
		\nonumber\\
&\hspace{0em}\stackrel{(c)}{=}
\sum_{p=1}^{M_a}	
		\sum_{i=1}^{\Lmk}
		\eta_{pq}^{1/2}\gamma_{pq,i},
\end{align}
where (a) follows by substituting $\varepsilon_{pq,i}=\hmki-\hmkihat$ and then using the fact that $\varepsilon_{pq,i}$ and $\hmkihat$ are independent RVs and $	\mathbb{E}\{\varepsilon_{pq,i}\}=0$; (b) follows from the fact that $\hmkihat$ are zero mean and independent; (c) follows from Lemma~\ref{lemma:Tqi}.

By using the fact that the variance of a sum of independent RVs is equal to the sum of the variances, $\mathbb{E}
\{ |\mathbb{BU}_q|^2 \}$ in~\eqref{eq:SINRdq} can be  derived as
\vspace{-0.0em}
\begin{align}~\label{eq:vardd}
\mathbb{E}
\Big\{ \big|\mathbb{BU}_q\big|^2 \Big\} &=
	\sum_{p=1}^{M_a}
				\eta_{pq}
						\Bigg(
							\mathbb{E}
								\Big\{	
									\Big|[{\qH}_{pq}]_{(r,:)}
                                              [\hat{\qH}_{pq}^\dag]_{(:,r)}
                                     \Big|^2
                                 \Big\}
                                  \nonumber\\
&\hspace{2em}
-\Big| \mathbb{E}\Big\{ [{\qH}_{pq}]_{(r,:)}
[\hat{\qH}_{pq}^\dag]_{(:,r)}\Big\}\Big|^2\Bigg).
\end{align}

Now, by using~\eqref{eq:Hpq}, and then applying Lemma~\ref{lemma:Tqi}, i.e., $\big[\qT_{pq}^{(i)}\qT_{pq}^{(i)^\dag}\big]_{(r,r)}=1$,~\eqref{eq:vardd} can be expressed as
\vspace{-0.0em}
\begin{align}~\label{eq:vard}
&\mathbb{E}
\Big\{ \big|\mathbb{BU}_q\big|^2 \Big\}
&=
\sum_{p=1}^{M_a}
\eta_{pq}
\bigg(\mathbb{V}_1-
\Big(\sum_{i=1}^{\Lmk}
\gamma_{pq,i}\Big)^2
\bigg),
\end{align}
where
\begin{align*}
\mathbb{V}_1=
\mathbb{E}
	\Bigg\{
			\bigg|\!
				\sum_{i=1}^{\Lmk}\hmki\hat{h}_{pq,i}^*\!+\!\!
				\sum_{i=1}^{\Lmk}\!\sum_{j\neq i}^{\Lmk}
				[\qT_{pq}^{(i)}\qT_{pq}^{(j)^\dag}]_{(r,r)}\hmki\hmkjhatc
			\bigg|^2
	\Bigg\}.
\end{align*}
Before proceeding to derive $\mathbb{V}_1$, we define $X\triangleq\sum_{i=1}^{\Lmk}\hmki\hat{h}_{pq,i}^*$ and $Y\triangleq\sum_{i=1}^{\Lmk}\sum_{j\neq i}^{\Lmk}\hmki\hmkjhatc$. We note that $\mathbb{E}\left\{ |X+Y|^2\right\} = \mathbb{E}\left\{ |X|^2\right\} + \mathbb{E}\left\{ |Y|^2\right\}$ if $X$ and $Y$ are independent RVs and $\mathbb{E}\left\{Y\right\}=0$. It can be readily checked that $Y$ is a zero mean RV which is independent from $X$. Hence,  $\mathbb{V}_1$ can be evaluated as
\begin{align}~\label{eq:EV10}
\mathbb{V}_1
&=  \sum_{i=1}^{\Lmk}
  				\left(
					\mathbb{E}
							\left\{
								\left|\varepsilon_{pq,i}\hat{h}_{pq,i}^*
								\right|^2
						\right\}\!
						+\!
					\mathbb{E}
						 \left\{
			 				\left|\hat{h}_{pq,i}\right|^4
			 			\right\}
 				 \right)
 				 \nonumber\\
&\hspace{-1em}
+ \sum_{i=1}^{\Lmk}	
			\sum_{j\neq i}^{\Lmk}
					\mathbb{E}
					\left\{
							\left|\hat{h}_{pq,i}\right|^2
					\right\}
					\mathbb{E}
					\left\{
							\left|\hat{h}_{pq,j}\right|^2
					\right\}
					 +
					\sum_{i=1}^{\Lmk}
					\sum_{j\neq i}^{\Lmk}
					\mathbb{E}
					\left\{
					 		\mathfrak{R}[\Psi]
					\right\}
\nonumber\\
&\hspace{-1em}
 +
 \sum_{i=1}^{\Lmk}\!
 \sum_{j\neq i}^{\Lmk}\!
				 \chi_{q.ij}
 						\mathbb{E}
 						\left\{\!
 								\left|\varepsilon_{pq,i}\right|^2
 								\!\!+\!\!
 								\left|\hat{h}_{pq,i}\right|^2
 						\!\right\}
 						\mathbb{E}
 						\left\{\!
 								\left|\hat{h}_{pq,j}\right|^2
 						\!\right\}, 						
\end{align}
where $\Psi\!=\!\bigg[\Big([\qT_{pq}^{(i)}\qT_q{pq}^{(j)^\dag}]_{(r,r)} [\qT_{pq}^{(j)}\qT_{pq}^{(i)^\dag}]_{(r,r)}^*\Big)
\big(\hat{h}_{pq,i}\hat{h}_{pq,j}^{*}\big)^2 \bigg]$. To this end, by using the facts that $ \mathbb{E}\big\{\big|\varepsilon_{pq,i}\big|^2\big\} = \beta_{pq,i}\!-\!\gamma_{pq,i} $ and $\mathbb{E}\big\{\mathfrak{Re}(\Psi)\big\} =0$,~\eqref{eq:EV10} is reduced to
\vspace{-0.0em}
\begin{align}\label{eq:EV1}
\mathbb{V}_1=&
\sum_{i=1}^{\Lmk}\!
\left(
\gamma_{pq,i} (\beta_{pq,i}\!-\!\gamma_{pq,i})
\!+\! 2\gamma_{pq,i}^2\right)
\nonumber\\
&\!+\! \sum_{i=1}^{\Lmk} \sum_{j\neq i}^{\Lmk}\!
\left(\gamma_{pq,i}
\gamma_{pq,j}
\!+\!
\chi_{q,ij}\beta_{pq,i}\gamma_{pq,j}\right).
\end{align}
 Therefore, by substituting~\eqref{eq:EV1} into~\eqref{eq:vardd} we obtain
\vspace{-0.0em}
\begin{align}~\label{eq:vard:final}
&\mathbb{E}
\Big\{ \big|\mathbb{BU}_q\big|^2 \Big\}\! =\!\!
\sum_{p=1}^{M_a}\eta_{pq}\!\sum_{i=1}^{\Lmk}
\beta_{pq,i}\Big(\gamma_{pq,i}
\!+\!
\sum_{j\neq i}^{\Lmk}\!
\chi_{q,ij}\gamma_{pq,j}
\!\Big).
\end{align}

The inter-symbol interference term can be obtained as
\vspace{-0.0em}
\begin{align}
\mathbb{E}
\big\{|\mathbb{I}_{q,1}|^2\big\}&=
	\mathbb{E}
			\bigg\{
					\bigg|
						\sum_{p=1}^{M_a}
							\sum_{r'\neq r}^{MN}\eta_{pq}^{1/2}
								[{\qH}_{pq}]_{(r,:)} [\hat{\qH}_{pq}^\dag]_{(:,r')}
					\bigg|^2
			\bigg\} \nonumber\\
&\hspace{-2.5em} =
	\mathbb{E}
			\Bigg\{\!
				  \Bigg|
						\sum_{p=1}^{M_a} \!
						\sum_{r'\neq r}^{MN}\! \!
						\eta_{pq}^{1/2}
						\Big(\!
						       \sum_{i=1}^{\Lmk}\!
								\hmki \hat{h}_{pq,i}^*[\qT_{pq}^{(i)}]_{(r,:)} [\qT_{pq}^{(i)^\dag}\!]_{(:,r')}
\nonumber\\
&\hspace{-1em}+ \!
				\sum_{i=1}^{\Lmk}\!
				\sum_{ j\neq i}^{\Lmk}\!
							\hmki \hmkjhatc
							[\qT_{pq}^{(i)}]_{(r,:)} [\qT_{pq}^{(j)^\dag}]_{(:,r')}
						\Big)
				\Bigg|^2
		  \Bigg\}.
\end{align}
To this end, by using Lemma~\ref{lemma:Tqi}, i.e.,  $[\qT_{pq}^{(i)}\qT_{pq}^{(i)^\dag}]_{(r,r')}=0$, and then by substituting $h_{pq,i}=\varepsilon_{pq,i}+ \hat{h}_{pq,i}$,  we get
\vspace{-0.3em}
\begin{align}~\label{eq:Iq1}
\mathbb{E}
\big\{|\mathbb{I}_{q,1}|^2\big\}
&=
\sum_{p=1}^{M_a}
	\eta_{pq}
			\sum_{i=1}^{\Lmk}\!
			\sum_{j\neq i}^{\Lmk}\!
			\kappa_{q,ij}
			\bigg(
				\mathbb{E}\Big\{\Big|\varepsilon_{pq,i} \Big|^2\Big\}
				\mathbb{E}\Big\{\Big| \hat{h}_{pq,j}\Big|^2\Big\}
\nonumber\\
&
\hspace{7em}
+
				\mathbb{E}\Big\{\Big|\hat{h}_{pq,i}\Big|^2\Big\}
				\mathbb{E}\Big\{\Big|\hat{h}_{pq,j}\Big|^2\Big\}
			\bigg)
\nonumber\\
&=
\sum_{p=1}^{M_a}
\sum_{i=1}^{\Lmk}\!
\sum_{j\neq i}^{\Lmk}\!
\eta_{pq}
\kappa_{q,ij}
\beta_{pq,i}
\gamma_{pq,j},
\end{align}
where the first equality holds as the variance of a sum of independent RVs is equal to the sum of the variances and the second equality holds since $\varepsilon_{pq,i}$ has zero mean and is independent of $\hat{h}_{pq,i}$.

Noticing that the channel gains of different users are independent zero mean RVs, and applying Lemma.~\ref{lemma:TqTqp:abs}, the inter-user interference term can be derived as
\vspace{-0.3em}
\begin{align}~\label{eq:Iq2}
\mathbb{E}
\big\{|\mathbb{I}_{q,2}|^2\big\}
&=
\sum_{p=1}^{M_a}
\sum_{q'\neq q}^{K_u}
\sum_{i=1}^{\Lmk}\!
\sum_{j=1}^{\Lmkp}\!
					\eta_{pq'}\beta_{pq,i}\gamma_{pq',j}.
\end{align}
To this end, by substituting~\eqref{eq:DS},~\eqref{eq:vard:final},~\eqref{eq:Iq1}, and~\eqref{eq:Iq2} into~\eqref{eq:SINRdq}, after some algebraic manipulations the desired result in~\eqref{eq:Rdq:final} is obtained.
\end{proof}

\begin{corollary}~\label{corr:dl:apx}
In the special case that different paths corresponding to the channel between the $q$th user and $p$th AP experience different delays (i.e., $\Lmki\neq\ell_{pq,j}$), the achievable downlink rate in~\eqref{eq:Rdq} reduces to~\eqref{eq:Rdq:final:asymp} at the top of the page.
\end{corollary}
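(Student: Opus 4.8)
The plan is to exploit the three structural lemmas to prove that, under the distinct-delay hypothesis, the quantity $\chi_{q,ij}+\kappa_{q,ij}$ equals unity for \emph{every} pair $(i,j)$. This single identity does double duty: it reduces the inner bracket $\sum_{j=1}^{\Lmk}\gamma_{pq,j}(\chi_{q,ij}+\kappa_{q,ij})$ in the denominator of~\eqref{eq:Rdq:final} to $\sum_{j=1}^{\Lmk}\gamma_{pq,j}$, and---since $\chi_{q,ij}$ and $\kappa_{q,ij}$ are the only $r$-dependent quantities in the per-channel SINR---it renders the SINR independent of $r$, so that the average $\frac{1}{MN}\sum_{r=1}^{MN}(\cdot)$ in~\eqref{eq:Rdq} collapses to a single logarithm.

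I would establish the identity by cases. For the diagonal term $i=j$, Lemma~\ref{lemma:Tqi} gives $\qT_{pq}^{(i)}\qT_{pq}^{(i)^\dag}=\qI_{MN}$, whose $(r,r)$ entry is $1$ and whose row-$r$ off-diagonal entries all vanish; hence $\chi_{q,ii}=1$, $\kappa_{q,ii}=0$, and $\chi_{q,ii}+\kappa_{q,ii}=1$. For the off-diagonal term $i\neq j$---where the distinct-delay hypothesis is used---the indices satisfy $\Lmki\neq\Lmkj$ with both in $\mathbb{N}[0,M-1]$, so $(\Lmki-\Lmkj)_M\neq0$ and Lemma~\ref{lemma:TqTqp} forces $\big[\qT_{pq}^{(i)}\qT_{pq}^{(j)^\dag}\big]_{(r,r)}=0$, i.e.\ $\chi_{q,ij}=0$. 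Decomposing the full row sum of Lemma~\ref{lemma:TqTqp:abs} into its diagonal entry plus its off-diagonal part and dropping the now-vanishing diagonal entry, the off-diagonal part equals the full row sum, so Lemma~\ref{lemma:TqTqp:abs} yields $\kappa_{q,ij}=\big|\sum_{r'=1}^{MN}\big[\qT_{pq}^{(i)}\qT_{pq}^{(j)^\dag}\big]_{(r,r')}\big|^2=1$. Therefore $\chi_{q,ij}+\kappa_{q,ij}=1$ in this case too.

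With $\chi_{q,ij}+\kappa_{q,ij}=1$ for all $i,j$, the bracketed term in~\eqref{eq:Rdq:final} becomes $\sum_{j=1}^{\Lmk}\gamma_{pq,j}+\sum_{q'\neq q}^{K_u}\sum_{j=1}^{\Lmkp}\frac{\eta_{pq'}}{\eta_{pq}}\gamma_{pq',j}$, which is precisely the denominator of~\eqref{eq:Rdq:final:asymp}; the resulting $r$-independence removes the outer average, and~\eqref{eq:Rdq:final:asymp} follows. I expect the off-diagonal step to be the main obstacle, because neither lemma suffices on its own: one must recognize that the vanishing of the diagonal entry supplied by Lemma~\ref{lemma:TqTqp} is exactly what allows the unit-modulus \emph{full} row sum of Lemma~\ref{lemma:TqTqp:abs} to be transferred onto the off-diagonal sum that defines $\kappa_{q,ij}$.
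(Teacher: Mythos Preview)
Your proposal is correct and follows essentially the same route as the paper: use Lemma~\ref{lemma:TqTqp} to obtain $\chi_{q,ij}=0$ for $i\neq j$, use Lemma~\ref{lemma:TqTqp:abs} to obtain $\kappa_{q,ij}=1$, and conclude that the SINR no longer depends on $r$. Your write-up is in fact more careful than the paper's terse proof---you explicitly treat the diagonal case $i=j$ via Lemma~\ref{lemma:Tqi} and you articulate why the vanishing diagonal entry is precisely what lets the full-row-sum statement of Lemma~\ref{lemma:TqTqp:abs} transfer to the off-diagonal sum defining $\kappa_{q,ij}$, a logical link the paper leaves implicit.
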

\begin{proof}
By invoking Lemma~\ref{lemma:TqTqp} and Lemma~\ref{lemma:TqTqp:abs}, it can be checked that when $\Lmki\neq\ell_{pq,j}$, we have $\chi_{q,ij}=0$ and $\kappa_{p,ij}=1$. Thus, the received SINR becomes independent of $r$, which completes the proof.
\end{proof}

\begin{remark}
By inspecting~\eqref{eq:Rdq:final:asymp}, we see that with perfect knowledge of delay and Doppler indices and using conjugate beamforming, the impact of delay shift and Doppler spread can be efficiently mitigated in the special case of $\Lmki\neq\ell_{pq,j}$.
\end{remark}

\vspace{-0.64em}
\section{Numerical Results and Discussions}~\label{Sec:Numer}
We consider an OTFS system with $N=20$ and $M=30$. The carrier frequency is $f_c=4$ GHz and the sub-carrier spacing is $\Delta f=15$ kHz. The maximum moving speed in the scenario is $500$ kmph, thus, the maximum Doppler index is $k_{max}=3$. We set maximum delay index as $\ell_{max}=2$ and the  number of channel paths as $L_{pq}=5$. The associated delay and Doppler indices for each channel path are randomly chosen from $[0,\ell_{max}]$ and $[-k_{max},k_{max}]$, respectively.

We assume that $M_a$ APs and $K_u$ users are uniformly distributed at random within a square of size $D \times D~\text{km}^2$ whose edges
are wrapped around to avoid the boundary effects. The large-scale fading coefficient $\beta_{pq}$ models the path loss
and shadow fading, according to $\beta_{pq,i} = \mathrm{PL}_{pq,i} 10 ^{\frac{\sigma_{sh} z_{pq,i}}{10}}$, where $\mathrm{PL}_{pq,i}$ represents the path loss, and $10 ^{\frac{\sigma_{sh} z_{pq,i}}{10}}$ represents the shadow fading with the standard deviation $\sigma_{sh}$, and $z_{pq,i} \sim \mathcal{CN}(0,1)$. We use the three-slope model for the path-loss $\mathrm{PL}_{pq,i}$ (in dB) as
\begin{align*}
\mathrm{PL}_{pq,i} \!=\!\! \left\{ \begin{array}{ll}
\!\!\!\!\!-L\!-35\log_{10}(d_{pq})         & d_{pq}>d_1,  \\
\!\!\!\!\!-L\!-15\log_{10}(d_{1})\!-20\log_{10}(d_{pq})            &  d_0<d_{pq}\leq d_1\\
\!\!\!\!\!-L\!-15\log_{10}(d_{1})\!-20\log_{10}(d_{0}) &  d_{pq}\leq d_0.\end{array} \right.
\end{align*}
where $L$ is a constant depending on the carrier frequency, the
user and AP heights, given in~\cite{Hien:cellfree}. We further use the correlated shadowing model for $d_{pq}>d_1$ as described in~\cite{Hien:cellfree}. Here, we choose $\sigma_{sh}=8$ dB, $D=1$ km, $d_1=50$ m, and $d_0=10$ m. We further set the noise figure $F = 9$ dB, and thus the noise power $\Sn=-108$ dBm ($\Sn=k_B T_0 (M\Delta f) F$ W, where $k_B=1.381 \times 10^{-23}$ Joules$/^{o}$K is the Boltzmann constant, while $T_0=290^o$K is the noise temperature). Let $\tilde{\rho}_d =1$ W, $\tilde{\rho}_u =0.2$ W and $\tilde{\rho}_p =1$ W be the maximum transmit power of the APs, users and uplink training pilots, respectively. The normalized maximum transmit power ${\rho}_d$, ${\rho}_u$, and ${\rho}_p$ are calculated by dividing these powers by the noise power $\Sn$. We assume that all APs transmit with full power, and at
the $p$th AP, the power control coefficients $\eta_{pq}$, $q=1,\ldots,K_u$, are the same, i.e.,
$\eta_{pq} = \big(\sum_{q'=1}^{K_u}\sum_{i=1}^{L_{pq}} \gamma_{pq',i}\big)^{-1}$, $\forall q=1,\ldots,K_u$.

\begin{figure}[t]
	\centering
	\vspace{0em}
	\includegraphics[width=97mm, height=70mm]{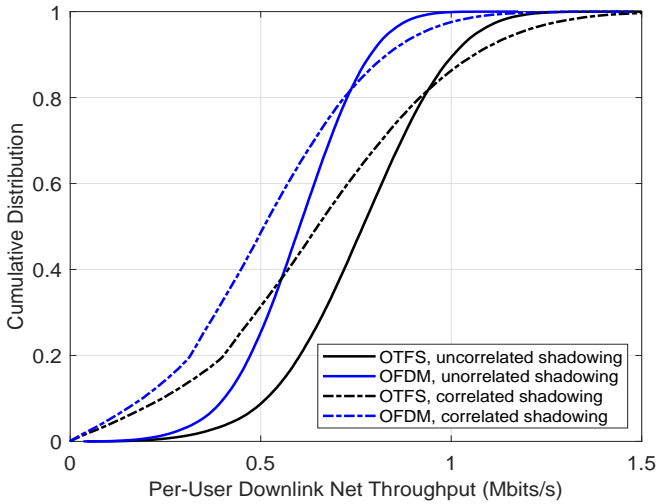}
	\vspace{-1em}
	\caption{The cumulative distribution of the per-user downlink throughput for correlated and uncorrelated shadow fading.}
	\vspace{-0.7em}
	\label{fig:Fig1}
\end{figure}

In Fig.~\ref{fig:Fig1}, we compare the performance of OTFS  with OFDM modulation system for $M_a=40$ and $K_u=20$ with and without shadow fading correlation. It is clear that  OTFS significantly outperforms OFDM in both median and in $95\%$-likely performance. The performance improvement of the OTFS over the OFDM is more pronounced in the uncorrelated shadowing case. More specifically, the $95\%$-likely throughput gain of the OTFS system over the OFDM counterpart is nearly $4$-fold greater in uncorrelated shadowing as compared to correlated shadowing.

Fig.~\ref{fig:Fig2} shows the average downlink throughput of the system versus number of APs and for two different number of users, with and without shadow fading correlation. We evaluate the SE of the system over 200 random realizations of the locations of APs, users and fading channels. The analytical results are based on Theorem~\ref{Prop:DL:rate}. It is observed that by increasing $K_u$, the per user rate decreases. This is because the same amount of resources are shared between more users. Moreover, the interference from other users increases, which degrades both the MMSE estimate of the channel gains and the received per-user SINR. This motivates  AP selection and user scheduling algorithm as well as pilot assignment algorithm design to manage the radio resources more efficiently.

\vspace{-0.0em}
\section{Conclusion}~\label{Sec:conclusion}
We analyzed the downlink performance of cell-free massive MIMO systems with OTFS modulation, taking into account the effects of channel estimation. A new closed-form expression for the average downlink throughput was presented. Our results confirmed the superiority of OTFS against OFDM in improving the throughput of the cell-free massive MIMO systems in high-mobility scenarios. Moreover,  the improvement in $95\%$-likely per-user throughput in uncorrelated shadowing is almost $4$ times higher than in the correlated fading scenarios.
\begin{figure}[t]
\centering
\vspace{0em}
\includegraphics[width=97mm, height=70mm]{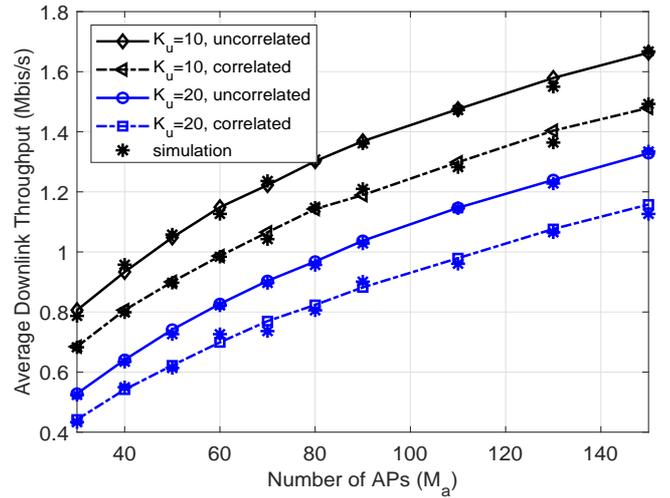}
\vspace{-1em}
\caption{The average downlink throughput versus the number of APs. }
\vspace{-0.7em}
\label{fig:Fig2}
\end{figure}
\vspace{0em}
\bibliographystyle{IEEEtran}


\end{document}